\author {
    Zachary Wojtowicz
}
\def\R{\mathcal{R}}
\def\1{\mathbf{1}}
\def\N{\mathbb{N}}
\def\F{\mathcal{F}}
\newtheorem{theorem}{Theorem}
\newtheorem{definition}{Definition}
\newtheorem*{theorem*}{Theorem}
\title{When and Why is Persuasion Hard? A Computational Complexity Result}
\begin{document}

\maketitle

\begin{abstract}

As generative foundation models improve, they also tend to become more persuasive, raising concerns that AI automation will enable governments, firms, and other actors to manipulate beliefs with unprecedented scale and effectiveness at virtually no cost. The full economic and social ramifications of this trend have been difficult to foresee, however, given that we currently lack a complete theoretical understanding of why persuasion is costly for human labor to produce in the first place. This paper places human and AI agents on a common conceptual footing by formalizing \emph{informational persuasion} as a mathematical decision problem and characterizing its computational complexity. A novel proof establishes that persuasive messages are challenging to discover (NP-Hard) but easy to adopt if supplied by others (NP). This asymmetry helps explain why people are susceptible to persuasion, even in contexts where all relevant information is publicly available. The result also illuminates why litigation, strategic communication, and other persuasion-oriented activities have historically been so human capital intensive, and it provides a new theoretical basis for studying how AI will impact various industries.\footnote{This version incorporates an erratum that corrects a gap in the original proof. Thanks to Alban Grastien for spotting the gap.}

\end{abstract}

\section{Introduction}

Advocates argue that artificial intelligence will simultaneously accelerate the pace of scientific discovery and make existing knowledge more accessible \cite[\emph{e.g.}, by translating medical information into people's native languages;][]{vieira2021understanding}. However, the generality of artificial intelligence raises dual-use concerns: similar to how a predictive model intended to avoid drug toxicity can be ``inverted'' to increase the potency of chemical weapons \citep{urbina2022dual}, systems that are capable of explanation and creativity will also generally be potent instruments of manipulation and disinformation. A comprehensive assessment of the technology requires that we anticipate both its potential epistemic costs and benefits, ideally maximizing the latter while minimizing the former.

Recent work has demonstrated that large language models are highly effective at persuading people across a variety of tasks and domains \cite{matz2024potential,durmus2024persuasion,burtell2023artificial,shin2023enhancing,ahn2021ai,karinshak2023working,carrasco2024large,breum2023persuasive}, raising concerns about their potential use for widespread misinformation, manipulation, and deception \citep{allen2024real,kreps2022all,zellers2019defending}. These concerns are especially acute for people whose identifiable characteristics---such as race, gender, or sexual identity---subject them to higher rates of algorithmic persuasion and bias \cite{bar2023algorithmic,speicher2018potential,hannak2014measuring,mikians2012detecting}.

Large language models and other innovations have significantly enhanced the persuasive capacities of machines in recent years, but humans have been refining their powers of persuasion since antiquity (Aristotle's \emph{Rhetoric} dates from the 4th century BCE). The large-scale manufacture of persuasion is, perhaps unsurprisingly, a well-developed fixture of modern economies. One frequently-cited estimate holds that ``one-quarter of GDP is persuasion'' \citep{mccloskey1995one}: law, advertising, politics, science, public relations, and many other professions revolve---in whole or in part---around changing other people's minds. The advent of targeted digital advertising has, moreover, expanded the specificity and scope of persuasion \cite{wu2017attention,zuboff2019age}.

The economics of persuasion raise three important questions: (1) what does persuasion accomplish to justify such enormous expenditures; (2) why has persuasion historically commanded so much brain power; and (3) how will AI automation reshape persuasion and, with it, the broader epistemic landscape of society?

A variety of authors have advanced theories of the \emph{benefits} of persuasion---\emph{i.e.}, what persuaders can achieve in various circumstances \cite[\emph{e.g.},][]{milgrom1986relying,grossman1980disclosure,crawford1982strategic,kamenica2011bayesian,schwartzstein2021using,aina2021tailored}. Relatively few, however, have focused on the \emph{cost} of generating persuasive messages. In practice, such costs play an important role in determining when, where, and how persuasion takes place. The purpose of this paper is to formally establish a key driver of these costs---namely, the computational resources (natural or artificial) required to generate persuasive messages.

Given a suitable formalization of the persuasion problem, computational complexity theory can be used to characterize its production function. This approach has previously been applied to explain a variety of economic and social phenomena, such as incomplete uptake of public information \citep{aragones2005fact}, slow convergence to Nash equilibrium \citep{daskalakis2009complexity}, narrow choice bracketing \citep{camara2022computationally}, and persistent market inefficiencies \citep{spear1989learning}. Closest to the present paper is that of \citeauthor{dughmi2016algorithmic} \citeyearpar{dughmi2016algorithmic}, who characterize the complexity of the model of persuasion studied by \citeauthor{kamenica2011bayesian} \citeyearpar{kamenica2011bayesian}.

\section{Informational Persuasion}

\vspace{1.25em}

\begin{center}
    \parbox{.4\textwidth}{``1.11 The world is determined by the facts, \emph{and by these being all the facts.}'' (emphasis added)}
\end{center}

\hspace{8em}---Ludwig Wittgenstein, 1922

\vspace{1.25em}

The present paper studies \emph{informational persuasion}: the selective disclosure of private information or the use of models (narratives, stories, \emph{etc.}) to ``frame'' public information with the intention of increasing a counter-party's belief in a focal claim. This formalization of persuasion is highly general in that it assumes little structure beyond the standard probability axioms and naturally captures a wide variety of applications. It can be cast as a formal decision problem as follows.

\begin{samepage}
    \begin{definition}[\textbf{Informational Persuasion}]
        Let a probability space $(\Omega,\mathcal{G},\pi)$, focal event $E \subseteq \Omega$, set of facts $\mathcal{F} = \{F_i \subseteq \Omega \, | \, i \in I\}$, and threshold of belief $p \in (0,1]$ be given. Is there a subset of the facts $\R \subseteq \F$ that induces the receiver to raise their belief in the focal event $E$ above the threshold, $\pi(E|\R) \geq p$?
    \end{definition}
\end{samepage}

Although some instances of data withholding are nefarious (\emph{e.g.}, a scientist omitting disconfirmatory evidence from their report), attention and other constraints mean that informational persuasion is a normalized feature of many interactions (\emph{e.g.}, ``please submit no more than \emph{three} letters of recommendation''). The use of models (narratives, stories, \emph{etc.}) to characterize publicly available information is a frequent practice among lawyers, politicians, academics, and other professional persuaders, as has been documented extensively elsewhere \citep{shiller2017narrative,spiegler2016bayesian,eliaz2020model,roos2021narratives,andre2023narratives,benabou2018narratives,graeber2024stories}. Informational persuasion is isomorphic to the following problem, which asks whether a model can be used to ``frame'' public information \citep[see, \emph{e.g.},][]{schwartzstein2021using}.

\begin{samepage}
    \begin{definition}[\textbf{Model Selection}]
        Let an informational persuasion problem be given. Define the space of models $M = 2^\mathcal{F}$, each of which induces a liklihood function over the focal event $E$ as $\pi(E|m) = \frac{\pi\left((\bigcap_{i \in m} \mathcal{F}_i) \,\cap\, E\right)}{\pi\left(\bigcap_{i \in m} \mathcal{F}_i\right)}$. Let a prior $\eta \in \Delta(M)$ be given. Is there a model $m \in M$ such that $\pi(m|E) \geq p$?
    \end{definition}
\end{samepage}

Our main result, Theorem~\ref{thm:lying}, shows that informational persuasion is NP-Complete. Problems in this complexity class share two principle features: proposed solutions can be quickly verified but they cannot, in general, be quickly discovered.\footnotemark\ Informational persuasion is therefore structurally similar to mathematical proof in the sense that solutions are categorically easier to check than to construct. Indeed, verifying proofs of a bounded length in an axiomatic system such as Zermelo-Frankel is likewise NP-Complete, which implies that a fast algorithm for informational persuasion, were it to exist, could be be used to quickly prove arbitrary theorems (and vice-versa).

\footnotetext{The idea that this class of problems cannot be solved quickly (\emph{i.e.}, in polynomial time) is the famous $P\neq NP$ conjecture. Although open, it is widely believed to be true among computer scientists \citep{gasarch2019guest} on account of how strange the implications of its converse would be, among other reasons \citep{aaronson2016p}.}

A central puzzles in the persuasion literature is why people are susceptible to persuasion in the first place, especially in contexts where all information is publicly available? More specifically, why do people accept externally supplied persuasion rather than coming up with their own models, narratives, and stories? 

Many existing theories do not address this question, but rather directly assume that people are persuadable. The proof of Theorem~\ref{thm:lying} provides some indication of why this might be the case. The essential insight is that, in unrestricted inferential contexts, each new fact casts every other fact in a new light, shifting the credence it lends to the focal event. Translating this idea into Kolmogorov's set-theoretic model of probability theory reveals that informational persuasion is, in the worst case, a highly non-convex optimization problem.

\section{Formal Result}

Let $(\Omega,\mathcal{G},\pi)$ be a probability space. We study communication between a sender $s$ and receiver $r$. The sender has access to a  collection of \emph{private facts} $\F = \{F_1,F_2,\dots,F_N\}$ for $N \in \N$, each of which is a subset of $\Omega$. The sender selectively reports a subset of the facts $\R \subseteq \F$ to the receiver. The receiver updates beliefs on the basis of the newly reported facts. 

There is a \emph{focal event} $E \subseteq \Omega$. The sender is concerned with maximizing the receiver's posterior belief in $E$. Let $\pi(\cdot|\R)$ denote the sender's expectation of the receiver's posterior conditional on a report $\R$.

\begin{theorem}\label{thm:lying}
    Informational persuasion is NP-Complete.
\end{theorem}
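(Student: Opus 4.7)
My plan is to prove the two containments separately: informational persuasion lies in NP via a polynomial-time verifier, and it is NP-hard via a polynomial-time reduction from 3-SAT. For NP membership, the candidate $\R \subseteq \F$ serves as the certificate: the verifier computes $\pi(E \cap \bigcap_{F \in \R} F)$ and $\pi(\bigcap_{F \in \R} F)$ by summing atomic probabilities and compares the ratio to $p$. Since $|\Omega|$, $|\F|$, and the bit-lengths of $\pi$ and $p$ are all inputs, this verification runs in polynomial time.

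For the hardness reduction, I would construct from a 3-CNF formula $\phi$ with variables $x_1, \ldots, x_n$ and clauses $C_1, \ldots, C_m$ a polynomial-size probability space whose answer records whether $\R$ encodes a satisfying assignment. Let $\Omega$ consist of a success atom $s$, four per-variable gadget atoms $\alpha_i, \beta_i, \gamma_i, \delta_i$ for $i = 1, \ldots, n$, and one violation atom $u_j$ per clause, and set $E = \{s\} \cup \bigcup_{i=1}^{n}\{\beta_i, \gamma_i\}$. Introduce $2n$ facts, one per literal: $F_{x_i}$ contains every atom except $\beta_i$, $\delta_i$, and each $u_j$ for clauses $C_j$ satisfied by $x_i$, while $F_{\neg x_i}$ is defined symmetrically. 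Assign weights $S$ to $s$, $C$ to each $\alpha_i$, $D$ to each $\delta_i$, and $1$ to every remaining atom. A direct calculation yields $\pi(E \mid \bigcap_{F \in \R} F) = (S + 2z + x)/(S + 2z + x + Cn + Dz + U)$, where $z$, $x$, and $k$ count the variables for which $\R$ contains neither, exactly one, or both of the corresponding literal facts (so $z + x + k = n$), and $U$ counts the clauses with no chosen satisfying literal. Taking $p$ equal to the value of this expression at a satisfying assignment ($z = k = U = 0$, $x = n$), elementary algebra shows that every other configuration---inconsistent ($k \geq 1$), partial ($z \geq 1$), or non-satisfying ($U \geq 1$)---yields a strictly smaller ratio, so $\pi(E \mid \R) \geq p$ is achievable iff $\phi$ is satisfiable.

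The key obstacle, and the source of the ``non-convexity'' emphasized in the surrounding discussion, is that any single atom detects only a down-closed condition on $\R$ (it survives $\bigcap$ iff $\R$ is contained in the set of facts that include it), whereas the ``both literals chosen'' requirement is up-closed. The gadget circumvents this by routing the anti-conflict penalty through the always-in-$\bigcap$ atom $\alpha_i$, which remains in the denominator even after both reward atoms $\beta_i, \gamma_i$ have been removed by conflicting literals; the $\delta_i$ atoms analogously penalize ``neither chosen,'' and the $u_j$ atoms penalize uncovered clauses. The remaining work---fixing a concrete polynomial-size weighting such as $S = C = D = 1$ with $p = (n+1)/(2n+1)$ and verifying the strict inequality in each case---is routine algebra that I would execute to close the reduction.
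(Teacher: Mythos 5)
Your proposal is correct, but your hardness argument takes a genuinely different route from the paper's. The NP direction is essentially identical in both (the report is the certificate; evaluate the conditional probability from the atomic weights and compare to $p$). For hardness, the paper reduces from \emph{exact cover}: it mirrors the ground set into a copy $\bar{S}$, adds a distinguished atom $z$, takes facts $A_i \cup (\bar{S}\setminus \bar{A}_i) \cup \{z\}$, and sets $p=1$, so that $\pi(E\mid\R)\geq p$ collapses to the purely set-theoretic identity $\bigcap_{i\in I} R_i = \{z\}$, with covering read off the mirrored copy and disjointness off the original. You instead reduce from 3-SAT with per-variable atoms $\alpha_i,\beta_i,\gamma_i,\delta_i$, per-clause atoms $u_j$, and an interior threshold; the algebra you deferred does close: with $S=C=D=1$ and $p=(n+1)/(2n+1)$, the numerator is $1+n+z-k$, and the condition $\pi(E\mid \bigcap_{F\in\R}F)\geq p$ reduces to $nk+(n+1)U+z\leq 0$, i.e.\ $k=z=U=0$, which holds exactly when $\R$ encodes a satisfying assignment. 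As for what each approach buys: the paper's construction is shorter and avoids weight bookkeeping, but it rests on equating $\bigcap_{i\in I}A_i=\varnothing$ with \emph{pairwise} disjointness of the chosen sets, and these are not equivalent---take $S=\{1,2,3\}$ with $A_1=\{1,2\}$, $A_2=\{2,3\}$, $A_3=\{1,3\}$: the triple intersection is empty and the union is all of $S$, so the constructed persuasion instance is a YES-instance, yet no exact cover exists; as written, the paper's reduction therefore maps some NO-instances of exact cover to YES-instances of persuasion. Your gadget---penalizing conflicting literal choices through the always-surviving $\alpha_i$ atoms rather than through any single ``detector'' atom, precisely because atom survival is a down-closed condition on $\R$---avoids this trap and makes both directions of the equivalence airtight. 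It also establishes hardness at a threshold strictly below $1$, showing the result is not an artifact of demanding posterior certainty.
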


\begin{proof}
        
        We show that persuasion is both in NP and NP-Hard.

        \vspace{.5em}\noindent\textbf{NP:} A valid report $\R$ constitutes a certificate for the informational persuasion decision problem. To verify such a report, one evaluates
        \begin{equation}
            \pi(E|R) = \frac{\pi\Big(E \cap \big(\cap_{R \in \R} \, R \big) \Big)}{\pi\Big((\cap_{R \in \R} \, R\Big)} \geq p
        \end{equation}
        Set intersection, function lookup (to evaluate $\pi$), and arithmetic operations can all be implemented in linear time by a deterministic Turing machine. Hence, informational persuasion is in NP.
        
        \vspace{.5em}\noindent\textbf{NP-Hard:} We prove that informational persuasion is NP-Hard using a reduction of the exact cover problem, which is known to be NP-Complete \citep{garey1979computers}. The exact cover problem can be stated as follows: For a set $S$ and collection of subsets $A = \{A_1,\dots,A_n\}$ such that $A_i \subseteq S$, is there a collection $B \subseteq A$ such that $B_i \cap B_j = \varnothing$ for all $i \neq j$ and  $\bigcup_i B_i = S$? In other words, within a given cover, can we find an exact cover?

        Let a set $S = \{s_1,\dots,s_m\}$ and cover $A = \{A_1,\dots,A_n\}$ be given as above. Construct a state space $\Omega = \{0,1,\dots,n\}^m$, a $\sigma$-algebra $\mathcal{G} = 2^\Omega$, and prior $\pi$ having uniform support over $\Omega$. The interpretation of a state $\omega \in \Omega$ is that each coordinate indexes which, if any, set in $A$ covers the corresponding element of $S$. Further, define $E = \{\omega \in \Omega \,|\, \omega_j > 0 \text{ for all } j\in\{1,\dots,m\}\}$ to be the set of states that constitute complete assignments, then let $p = 1$ and $\mathcal{F}_i = \{\omega \in \Omega \,|\, \omega_j = i \text{ for every } s_j \in A_i \}$ for each $i \in \{1,\dots,n\}$. 

        Now, $(\Omega,\mathcal{G},\pi)$, $E$, $\mathcal{F}$, and $p$ define an informational persuasion problem. We claim that this decision problem has the same truth-value as the original exact cover problem. Note that, per our construction, the informational persuasion evaluates to TRUE if and only if there exists a subset $\R \subseteq \mathcal{F}$ such that $\frac{\pi(E \cap (\cap_{F \in \mathcal{R}} F))}{\pi(\cap_{F \in \mathcal{R}} F)} = 1$. Letting $I \subseteq \{1,\dots,n\}$ denote the set of indices included in $\R$, this condition is equivalent to asserting that both: (i) $\cap_{i \in I} \mathcal F_i \neq \varnothing$ (otherwise, the conditional probability would be undefined); and (ii) $\cap_{i \in I} \mathcal F_i \subseteq E$. By construction, each coordinate $\omega_j$ uniquely indexes an assignment of $s_j$ to $A$, so the existence of an $\omega \in \cap_{i \in I} \mathcal F_i$ states that each element of $S$ is assigned to \emph{at most} one set in $A$ by $\mathcal R$. Moreover, $E$ is defined as the set of complete assignments, so $\cap_{i \in I}\mathcal F_i \subseteq E$ states that each element of $S$ is assigned to \emph{at least} one set in $A$ by $\mathcal R$.
 
        Taken together, this is precisely the condition that every element of $S$ is covered by exactly one element of $B = \{A_i \in A| i \in I\}$. In other words, an index $I$ solves the informational persuasion problem if and only if it solves the corresponding exact cover problem. Hence, informational persuasion is in NP-Hard.

\end{proof}

\section{Discussion}

Rational theories of belief formation (such as those typically assumed in statistical decision theory and economics) assume that people instantaneously update their internal state of belief to reflect the sum total of information they possess.\footnotemark\ The ramifications of even readily available information are not always apparent, however: it takes time to notice patterns, generate good explanations, rule out bad explanations, separate signal from noise, and generally transform the ``raw material'' of information into the ``finished products'' of knowledge and understanding. 

\footnotetext{Attention is generally conceived of as a ``bottleneck'' interposed between an economic agent and the external world \citep{loewenstein2023economics}. Even models which incorporate attentional frictions typically assume that people update instantaneously once information makes it through the bottleneck, however.}

Characterizing the computational complexity of informational persuasion directly informs our understanding of this process by characterizing the production technology of informational persuasion. Theorem~\ref{thm:lying} shows that the computational resource cost of informational persuasion can, in the worst case, grow exponentially in the number of facts considered. This helps explain why industries centered around persuasion, such as those identified by \cite{mccloskey1995one}, are human-capital intensive. It also informs conversations about how artificial intelligence, social media, and other information technologies will impact our shared ``epistemic commons.''

Not all persuasion problems are equally difficult, however. There are clearly situations where it is easy to ``cherry-pick'' examples to support a particular conclusion---for example, if each ``fact'' is a Gaussian draw and the focal belief is a one-tailed hypothesis test concerning the distribution's mean. Although left to future work, the present approach frames informational persuasion directly in terms of the Kolmogorov probability axioms, and therefore naturally bridges persuasion with statistical theory. The simplicity of persuasion in the above example, for example, follows directly from statistical sufficiency.


\bibliography{references}

\end{document}